\tikzset{>=latex}
\newtheorem{theorem}{Theorem}
\newtheorem{lemma}[theorem]{Lemma}
\newtheorem{proposition}[theorem]{Proposition}
\newtheorem{example}[theorem]{Example}
\newtheorem{problem}[theorem]{Problem}
\def\NP{\ensuremath{\mathbf{NP}}}
\def\RP{\ensuremath{\mathbf{RP}}}
\def\Ptime{\ensuremath{\mathbf{P}}}
\def\BPP{\ensuremath{\mathbf{BPP}}}
\def\numP{\textup{\#}\mathbf{P}}
\def\*#1{\mathbf{#1}}
\def\+#1{\mathcal{#1}}
\def\-#1{\mathrm{#1}}
\def\=#1{\mathbb{#1}}
\def\bfz{\mathbf{0}}
\def\bfo{\mathbf{1}}
\def\Nset{\mathbb{N}}
\def\Zset{\mathbb{Z}}
\def\Rset{\mathbb{R}}
\def\nIS{\textsc{\#IS}}
\def\nBIS{\textsc{\#BIS}}
\def\nPolytopeVertices{\textsc{\#Vertices of a 0/1 Polytope}}
\def\npnSAT{\textsc{\#1p1nSAT}}
\def\Zptm{Z_\textup{P2M}}
\def\Zptmk#1{Z_{\textup{P2M},#1}}
\def\Zoc{Z_\textup{OC}}
\newcommand{\transpose}[1]{#1^{\texttt T}}
\newcommand{\abs}[1]{\left\vert#1\right\vert}
\let\implies=\Rightarrow
\newcommand{\cupdot}{\mathbin{\mathaccent\cdot\cup}}
\let\epsilon=\varepsilon
\title{Counting vertices of integral polytopes defined by facets}
\author{Heng Guo\thanks{HG has received funding from the European Research Council (ERC) under the European Union's Horizon 2020 research and innovation programme (grant agreement No.~947778).}\\School of Informatics\\University of Edinburgh\\\url{hguo@inf.ed.ac.uk} \and 
Mark Jerrum\thanks{MJ was supported by grant EP/S016694/1 `Sampling in hereditary classes' from the Engineering and Physical Sciences Research Council (EPSRC) of the UK.}\\School of Mathematical Sciences\\Queen Mary, University of London\\\url{m.jerrum@qmul.ac.uk}}
\date{}							
\begin{document}
\maketitle

\begin{abstract}
We present a number of complexity results concerning the problem of counting vertices of an integral polytope defined by a system of linear inequalities.  The focus is on polytopes with small integer vertices, particularly 0/1 polytopes and half-integral polytopes.  
\end{abstract}

\subsubsection*{Keywords}0/1 polytopes, approximation algorithms, computational complexity of counting, totally unimodular matrices. 

\subsubsection*{MSC} 68Q17 (Computational difficulty of problems), 52B05 (Combinatorial properties of polytopes and polyhedra), 68W25 (Approximation algorithms).

\subsubsection*{Data availability statement} Data sharing not applicable to this article as no datasets were generated or analysed during the current study.

\section{Introduction}

We are interested in counting (or uniformly sampling) vertices of a polytope defined by linear inequalities $Ax\leq b$.  In particular we concentrate on the sort of polytopes that arise in the study of computational optimisation problems: integral polytopes (convex hulls of finite subsets of $\Nset^n$) and especially $0/1$ polytopes (convex hulls of subsets of $\{0,1\}^n$).  We assume that instances are presented as systems of linear inequalities. 

\goodbreak

\begin{description}[itemsep=0pt]
\item [Problem:] \nPolytopeVertices.
\item [Instance:]  An $m\times n$ integer matrix $A\in\Zset^{m\times n}$ and a vector $b\in\Zset^m$.
\item [Promise:] The inequalities $Ax\leq b$ define a 0/1 polytope $P$.
\item [Output:]  The number of vertices of $P$.
\end{description}

It is not clear whether it is possible to decide, in polynomial time, if a given system of linear inequalities defines a 0/1 polytope.
Papadimitriou and Yannakakis~\cite{PapadimitriouYannakakis} have proved that deciding integrality of general polytopes is co-NP-complete, but the restiction to the 0/1 case may be easier.  
By combining several technically involved results, Ding et al.~\cite{DingFengZang} show that integrality of the polytope $\{x\in\Rset^n:Ax\leq b, x\geq\bfz\}$ can be decided in polynomial time, in the case where $A$ is a 0/1 matrix and $b=\bfo$.  
Given the uncertainty surrounding this question, it seems reasonable to add the promise that $P$ is a 0/1 polytope.  Note that checking that a polytope~$P$ is 0/1 essentially reduces to checking that $P$ is integral, as containment of~$P$ in the cube $[0,1]^n$ can be tested efficiently by linear programming.

In cases of interest, the matrix $A$ and vector $b$ will not be arbitrary, but will have been introduced with a certain goal in mind.  The insights that led to their construction can almost certainly be used to provide an elementary proof of integrality of~$P$.  Furthermore, except in Section~\ref{sec:TDI}, the matrix $A$ will be `totally unimodular', and hence will necessarily define an integral polytope.  Total unimodularity is decidable in polynomial time using Seymour's decomposition theorem for regular matroids~\cite{SeymourDecomp}:  see Truemper~\cite{Truemper}.  

Our first observation is that \nPolytopeVertices{} is hard to solve \emph{exactly}. This follows easily by encoding perfect matchings in bipartite graphs as vertices of a $0/1$ polytope. A proof of this easy result can be found at the end of this section.

\begin{proposition}\label{prop:numPcomplete}
\nPolytopeVertices\ is $\numP$-complete.
\end{proposition}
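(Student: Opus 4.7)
The plan is the standard two-step argument for $\numP$-completeness: establish membership in $\numP$ and then reduce a known $\numP$-hard counting problem to $\nPolytopeVertices$.

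For membership in $\numP$, I would observe that a candidate vertex of a 0/1 polytope is naturally a string in $\{0,1\}^n$ of length $n$, and hence is polynomially bounded in the instance size. Given $A$, $b$, and a candidate $x\in\{0,1\}^n$, I need a polynomial-time verifier that accepts iff $x$ is a vertex of $P=\{y:Ay\leq b\}$. Check first that $Ax\leq b$. Then collect the rows of $A$ corresponding to tight inequalities at $x$ (those with $A_i x = b_i$) and verify that their rank is $n$, which is a polynomial-time linear-algebra computation. A point in $P$ is a vertex iff it is the unique solution to its set of tight constraints, equivalently iff those tight rows have full rank~$n$. Thus $\nPolytopeVertices\in\numP$.

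For hardness, I would reduce from $\numP$-complete counting of perfect matchings in bipartite graphs (Valiant's theorem for the 0/1 permanent). Given a bipartite graph $G=(U\cupdot V,E)$ with $|U|=|V|=n$, introduce a variable $x_e$ for every $e\in E$ and define
\begin{align*}
P_G = \bigl\{x\in\Rset^E : x_e\geq 0,\ \textstyle\sum_{e\ni u} x_e = 1\ \forall u\in U,\ \sum_{e\ni v}x_e=1\ \forall v\in V\bigr\}.
\end{align*}
This is the bipartite perfect matching polytope. The constraint matrix (incidence matrix of a bipartite graph together with $-I$) is totally unimodular, so $P_G$ is an integral polytope; since the degree constraints force $0\leq x_e\leq 1$, it is contained in $[0,1]^{|E|}$ and is therefore a 0/1 polytope. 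Its vertices are precisely the indicator vectors of perfect matchings of $G$, so $|\text{vertices}(P_G)|$ equals the number of perfect matchings in $G$. Rewriting the equalities as pairs of inequalities yields an instance of $\nPolytopeVertices$ of polynomial size, completing the reduction.

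No step should pose a real obstacle: the membership argument is routine once one recalls the rank characterisation of vertices, and the hardness direction is the well-known bijection between vertices of the bipartite perfect matching polytope and perfect matchings. The only mildly delicate point is verifying that the promise is satisfied, i.e.\ that $P_G$ really is a 0/1 polytope; this is handled by invoking total unimodularity of the bipartite incidence matrix together with containment in $[0,1]^E$, as remarked in the paper's discussion preceding the proposition.
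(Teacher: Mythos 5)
Your proof is correct and takes essentially the same route as the paper: membership in $\numP$ via the standard rank-of-tight-constraints characterisation of vertices (which the paper dismisses as ``clear''), and hardness by encoding perfect matchings of a bipartite graph as vertices of the bipartite perfect matching polytope and invoking Valiant's theorem. The only cosmetic difference is that you spell out the $\numP$-membership verifier in detail, which the paper omits.
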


In light of this strong negative result, we naturally turn our attention to approximate counting.  Before presenting our results, we briefly review the main definitions and concepts used in the study of approximation algorithms for counting problems.   The reader is directed to Dyer, Goldberg, Greenhill and Jerrum~\cite{Relative} for precise definitions and a survey of the wider context. 

The standard notion of efficient approximation algorithm in the context of counting problems is the {\it Fully Polynomial Randomised Approximation Scheme}, or FPRAS{}.  This is a randomised algorithm that is required to produce a solution within relative error $1\pm\varepsilon$,\footnote{For many problems, including those considered here, a polynomial approximation ratio can be efficiently amplified to a $(1\pm\varepsilon)$ approximation ratio.} with probability at least $\frac34$,\footnote{The success probability can be raised from $\frac34$ to arbitrarily close to~1 by running the algorithm a number of times and taking the median of the outputs.}  in time polynomial in the instance size and $\varepsilon^{-1}$.
A deterministic algorithm that achieves the same end without error is called a {\it Fully Polynomial Time Approximation Scheme\/} or FPTAS{}.  The computational complexity of approximate counting problems can be compared through {\it Approximation-Preserving\/} (or AP-) {\it reductions}.  These are polynomial-time Turing reductions that preserve (closely enough) the error tolerance.\footnote{Although AP-reductions are generally allowed to be randomised, the ones employed here are all deterministic.}  The set of problems that have an FPRAS is closed under AP-reducibility.  

Stockmeyer~\cite{Stockmeyer} was the first to produce evidence that approximate counting is of lower computational complexity than exact counting.  This key insight was refined by Valiant and Vazirani~\cite[Cor.~3.6]{NPeasy}, who showed that every function in $\numP$ can be approximated (in the FPRAS sense) by a polynomial-time Turing machine with an oracle for an \NP-complete problem.  Therefore the strongest negative result we can have for approximate counting is one of \NP-hardness.  An example of such a maximally hard problem is $\nIS$, which asks for the number of independent sets of all sizes in a general graph.  It follows that the existence of an FPRAS for $\nIS$ would imply $\RP=\NP$, and the existence of an FPTAS would imply $\Ptime=\NP$.  An \NP-hard problem concerning polytopes is given in the next section.  

Many counting problems are known to have an FPRAS and many others to be \NP-hard to approximate.  An interesting empirical observation is that many of the rest are interreducible via approximation-preserving reductions.  One member of this equivalence class is $\nBIS$, which asks for the number of independent sets of all sizes in a bipartite graph.  
\begin{description}[itemsep=0pt]
\item [Problem:] \nBIS.
\item [Instance:]  A bipartite graph $B$.
\item [Output:]  The number of independent sets (of all sizes) in $B$.
\end{description}
After two decades of fairly intensive study, no FPRAS for $\nBIS$ has been discovered, but neither has $\nBIS$ been shown to be \NP-hard to approximate.  So showing that a counting problem $\Pi$ is approximation-preserving interreducible with $\nBIS$ can be interpreted as evidence that $\Pi$ does not admit an FPRAS, though the evidence falls short of a demonstration of \NP-hardness.  An example involving 0/1 polytopes is presented in Section~\ref{sec:transposenetwork}.

Khachiyan~\cite{Khachiyan} showed that approximately counting vertices of general polytopes presented as linear inequalities is \NP-hard, a result rediscovered by Najt \cite{Najt}.  As we just saw, this is the strongest possible demonstration of intractability.  However, the vertices of the polytopes employed in the proof have rational coordinates, which, if rescaled to integers, would be exponential in the dimension of the ambient space.  Our aim in this work is to find hard examples in small integers.

Ultimately, we would like to characterise the complexity of computing approximate solutions to \nPolytopeVertices.  We have not been able to establish the strongest intractability result, which would be a demonstration of \NP-hardness.  
The typical approach of showing \NP-hardness for an approximate counting problem is to reduce from a combinatorial optimisation problem.
However, the combinatorial optimisation problem related to a $0/1$ polytope is often tractable, which makes showing \NP-hardness difficult.

Instead, we are able to obtain an \NP-hardness result by relaxing the allowed problem instances from 0/1 polytopes to `half-integral' polytopes whose vertices are elements of $\{0,1,2\}^n$.  (More conventionally, these polytopes are scaled by a factor 2, so as to have vertices in $\{0,\frac12,1\}^n$.)  This negative result relates to a natural family of `perfect 2-matching polytopes' associated with graphs.  See Proposition~\ref{prop:P2M-vertices}.

What can be said about the complexity of \nPolytopeVertices{} itself?  The most prominent class of matrices defining integral polytopes are the \emph{totally unimodular matrices}.  These are matrices~$A$ that have the property that every square submatrix of $A$ has determinant $-1$, 0 or 1.  (In particular, the elements of $A$ take values in $\{-1,0,1\}$.)  Many 0/1 polytopes arising in combinatorial optimisation arise from totally unimodular matrices.  Network matrices and transposes of network matrices are natural subclasses of totally unimodular matrices that will be defined in Section~\ref{sec:transposenetwork}.  In some sense, these matrix classes are universal in that every totally unimodular matrix can be built from network matrices, transposes of network matrices, and a certain $5\times5$ matrix.  

In Section~\ref{sec:transposenetwork}, we show (Theorem~\ref{thm:BISequiv}) that \nPolytopeVertices, when restricted to transposes of network matrices, is interreducible with $\nBIS$ with respect to approximation-preserving reductions.  This locates this special case of the problem as accurately as possible, given the current state of knowledge of the complexity landscape.   When restricted to network matrices, \nPolytopeVertices\ appears to be become easier, and we identify a subclass of polytopes (Proposition~\ref{prop:network}) for which the vertex counting problem is solvable in the FPRAS sense.  We leave it as an open question whether there is an FPRAS for \nPolytopeVertices\ when restricted to network matrices more generally.  

In the final section we go beyond totally unimodular matrices.  Here, there are fewer examples in the literature, but we do note that at least one naturally occurring class of polytopes, arising from `stable matchings', has been shown to give rise to a vertex counting problem that is interreducible with $\nBIS$.  This raises the intriguing possibility that \nPolytopeVertices\ itself is actually equivalent in complexity to $\nBIS$.

The complexity of approximate counting and (almost) uniform sampling are usually closely related, and this is indeed the case here.  For simplicity we concentrate throughout on approximate counting, but the corresponding uniform sampling problems have essentially the same complexity.  This follows by general considerations~\cite{JVV} from the fact that \nPolytopeVertices\ and all the restrictions of it considered here are self-reducible.  We expand on this remark at the end of Section~\ref{sec:transposenetwork}.   

Mihail and Vazirani \cite{MV89} conjectured that the simple random walk on the graph of a $0/1$ polytope is rapidly mixing.  This conjecture remains open, though it has been shown to hold in important special cases, for example, the polytope defined by the bases of a matroid~\cite{ALOV}. Note that this conjecture, if true, does not directly imply an FPRAS for \nPolytopeVertices{}, because the degree of the vertices can be exponentially large in the dimension of the ambient space. The stationary distribution of the random walk over graphs of $0/1$ polytopes can be very different from the uniform distribution.

Although we restrict attention to complexity-theoretic results in this work, it should be noted that several authors have studied heuristic approaches, including Avis and Devroye~\cite{AvisDevroye} and Salomone, Vaisman and Kroese~\cite{SalomoneEtAl}.  We round off the section with the deferred proof. 

\begin{proof}[Proof of Proposition \ref{prop:numPcomplete}]
Membership in $\numP$ is clear.  

To demonstrate hardness, we simply show that \nPolytopeVertices\ includes counting perfect matchings in a bipartite graph as a special case.  Suppose $G$ is a bipartite graph, and denote the vertex and edge sets of~$G$ by $V(G)$ and $E(G)$. Introduce variables $\{x_{uv}:uv\in E(G)\}$ in 1-1 correspondence with the $m$ edges of~$G$.  Consider the $m$-dimensional perfect matching polytope $P_\mathrm{PM}(G)$ of~$G$, whose vertices are in bijection with the perfect matchings of~$G$.  Specifically, for each perfect matching $M\subseteq E(G)$ of $G$ there corresponds a vertex of $P_\mathrm{PM}(G)$ given by 
$$
x_{uv}=\begin{cases}
1,&\text{if $uv\in M$};\\
0,&\text{otherwise},
\end{cases}
$$
for all $uv\in E(G)$.  In the case that $G$ is bipartite, the polytope $P_\mathrm{PM}(G)$ is defined by the following inequalities~\cite[Thm 18.1]{SchrijverA}:
\begin{align*}
0\leq x_{uv}\leq1,&\quad \text{for all $uv\in E(G)$, and}\\
\sum_{v\in V(G):uv\in E(G)} x_{uv}=1,&\quad \text{for all $u\in V(G)$}.
\end{align*}
Thus, counting perfect matchings in a bipartite graph can be reduced to counting vertices of an easily computable and easily described polytope.  The result follows from Valiant's classical result that counting perfect matchings in a bipartite graph is $\numP$-complete~\cite{ValPerm}. 
\end{proof}

\section{The perfect 2-matching polytope}
In this section we see that by going a little beyond 0/1 polytopes we can find counting problems that are \NP-hard to approximate.

Given a graph $G$, the \emph{perfect 2-matching polytope} (P2M polytope) is defined by the system of linear inequalities
\begin{align*}
  0\leq x_{uv}\leq 2, & \quad\text{for $uv\in E(G)$};\\
  \sum_{v:uv\in E(G)} x_{uv} = 2, & \quad\text{for $u\in V(G)$}.
\end{align*}
Let $\Zptm(G)$ be the number of vertices of the P2M polytope associated with $G$.
We will show that approximating $\Zptm(G)$ is \NP-hard.



Recall that an edge cover of a graph is a set $C$ of edges such that any vertex is incident to some edge $e\in C$.
Balinski \cite{Balinski} observed the following characterisation of the vertices of the P2M polytope.
(See also Schrijver~\cite[Cor.~30.2b]{SchrijverA} together with the observation at the end of \cite[\S30.3]{SchrijverA}.)

\begin{proposition}  \label{prop:P2M-vertices}
  The vertices of the P2M polytope correspond to edge covers consisting of a matching $M$, 
  with $x_e=2$ for $e\in M$, and vertex-disjoint odd cycles that are also vertex-disjoint from $M$, 
  with $x_e=1$ for each edge $e$ in any of the odd cycles.
\end{proposition}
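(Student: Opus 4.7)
The plan is to prove the stated correspondence in both directions. For the easier direction, given an edge cover consisting of a matching $M$ and a family $\mathcal{C}$ of vertex-disjoint odd cycles disjoint from $M$, I would define $x_e = 2$ for $e \in M$, $x_e = 1$ for each edge on a cycle of $\mathcal{C}$, and $x_e = 0$ otherwise, and then verify that each vertex-sum constraint holds with equality. To show that $x$ is an extreme point, I would argue that the tight constraints determine $x$ uniquely: on each odd cycle $e_1, \ldots, e_{2k+1}$ the vertex-sum equations around the cycle reduce to the system $x_{e_i} + x_{e_{i+1}} = 2$, whose unique solution is $x_{e_i} = 1$ exactly because the cycle length is odd.

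For the converse, let $x^*$ be an arbitrary vertex of the P2M polytope and put $E^* = \{e : 0 < x^*_e < 2\}$, the set of edges whose box constraints are not tight. Since $x^*$ is extreme, no nonzero direction supported on $E^*$ can be feasible, which means the vertex-edge incidence matrix of the spanning subgraph $H = (V, E^*)$ must have full column rank. The standard rank formula for graph incidence matrices then implies that every connected component of $H$ contains at most one cycle, and any such cycle must be odd (bipartite components have rank one less than their number of vertices and therefore cannot be unicyclic).

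The last step is to eliminate tree components and tree appendages by noting that $H$ cannot contain a leaf. Indeed, at a leaf $v$ of $H$ with unique $H$-edge $e$, the vertex-sum constraint at $v$ reads $x^*_e + \sum_{f \ne e,\, f \ni v} x^*_f = 2$, and each $f \notin E^*$ contributes either $0$ or $2$, forcing $x^*_e \in \{0, 2\}$ and contradicting $e \in E^*$. Hence each component of $H$ is a single odd cycle, and the calculation from the first paragraph pins $x^*_e = 1$ on every such edge while simultaneously forcing all other edges at cycle vertices to be $0$. The edges with $x^*_e = 2$ then form a matching $M$ disjoint from the cycles, and since every vertex needs at least one positive incident edge (otherwise its sum would be $0$), the matching together with the cycles constitutes an edge cover. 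I expect the main technical hurdle to be the incidence-matrix rank step, but this is classical and can be quoted as a black box.
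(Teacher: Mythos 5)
The paper does not include a proof of this proposition; it simply cites Balinski and Schrijver's book. So there is no in-paper argument to compare against, and your proof has to be assessed on its own merits. It is correct and is essentially the standard argument for fractional extreme points of degree-constrained polytopes. In the forward direction, the claim that the tight box constraints plus the vertex-sum equalities uniquely determine the point is right: around an odd cycle, the system $x_{e_i}+x_{e_{i+1}}=2$ has $x\equiv 1$ as its unique solution precisely because the cycle length is odd. In the converse direction, the chain of reasoning is sound: extremality forces the vertex-edge incidence matrix restricted to the fractional support $E^*$ to have trivial kernel; the rank formula (rank equals number of vertices minus number of bipartite components) then forces each component of $(V,E^*)$ to be either a tree or an odd unicyclic graph; the leaf-elimination step kills all tree parts (including pendant trees on the unicyclic components), leaving disjoint odd cycles; and the degree constraints then pin the cycle edges to $1$, the remaining edges to $\{0,2\}$, show the $2$-edges form a matching vertex-disjoint from the cycles, and yield the edge-cover property since every vertex sum is $2>0$. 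One small point worth stating explicitly rather than folding into ``the calculation from the first paragraph'': before you can reduce to the alternating system $x_{e_i}+x_{e_{i+1}}=2$ on a cycle, you must rule out the possibility that some non-cycle edge at a cycle vertex carries value $2$ (which would force a cycle edge to $0$, contradicting membership in $E^*$); this is immediate but is a separate observation, not a consequence of the first paragraph's computation.
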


\begin{theorem} \label{thm:P2M-NP-hard} The problem of estimating $\Zptm(G)$ with constant relative error is hard for $\NP$ with respect to polynomial-time reducibility.  Thus, there is no FPRAS for $\Zptm(G)$ unless $\RP=\NP$ and no FPTAS unless $\Ptime=\NP$.
\end{theorem}
\begin{proof}
  We call edge covers corresponding to the vertices of the P2M polytope, as in \Cref{prop:P2M-vertices}, \emph{P2M covers},
  and call edge covers of $G$ consisting of vertex-disjoint odd cycles without a matching \emph{odd cycle covers}.  We denote the number of odd cycle covers of~$G$ by $\Zoc(G)$.  By definition, the number of P2M covers is $\Zptm(G)$.

  First we reduce deciding the existence of a Hamiltonian path between two given vertices in a bipartite graph to deciding the existence of an odd cycle cover in a general graph.  The former problem is known to be \NP-complete~\cite{bipartiteHC}.  (The given reference treats the Hamilton \emph{cycle} problem, but essentially the same reduction deals with \emph{paths}.)
  Let $G$ be a bipartite graph with two distinguished vertices $s$ and~$t$ on opposite sides of the bipartition.
  We introduce a new vertex~$w$, and add two edges $sw$ and $wt$.
  Call the resulting graph~$G'$.  
  Any odd cycle in $G'$ must include the new vertex~$w$, as $G'-w$ is bipartite.
  It follows that any odd cycle cover of~$G'$ must consist of a Hamiltonian path in $G$ from $s$ to~$t$, together with the two edges incident at~$w$.  
  Conversely, any Hamiltonian path in $G$ from $s$ to~$t$ can be extended to an odd cycle cover of~$G'$ by adding edges $sw$ and $wt$.
  Thus, deciding the existence of an odd cycle cover of a graph is \NP-complete.

  Next we reduce deciding the existence of an odd cycle cover to approximating the number of P2M covers.
  For the reduction, we use the gadget in \Cref{fig:hex-gadget} to reduce the contribution from the matching (isolated edges).
  Note that the parameter~$\ell$ will be tuned later.

  \begin{figure}[htbp]
    \centering
    \begin{tikzpicture}[scale=0.8, inner sep=1pt, transform shape]
      \draw (0,0) node [draw,fill,shape=circle,color=black, label=180:{\Large $u$}] (u) {}; 
      \draw (0.5,0.866) node [draw,fill,shape=circle,color=black] (u1) {};
      \draw (1.5,0.866) node [draw,fill,shape=circle,color=black] (u2) {};
      \draw (0.5,-0.866) node [draw,fill,shape=circle,color=black] (u1') {};
      \draw (1.5,-0.866) node [draw,fill,shape=circle,color=black] (u2') {};
      \draw (2,0) node [draw,fill,shape=circle,color=black] (u3) {};

      \draw (u) edge [semithick] (u1)
      (u1) edge [semithick] (u2)
      (u2) edge [semithick] (u3)
      (u) edge [semithick] (u1')
      (u1') edge [semithick] (u2')
      (u2') edge [semithick] (u3);

      \draw (2.5,0.866) node [draw,fill,shape=circle,color=black] (u4) {};
      \draw (3.5,0.866) node [draw,fill,shape=circle,color=black] (u5) {};
      \draw (2.5,-0.866) node [draw,fill,shape=circle,color=black] (u4') {};
      \draw (3.5,-0.866) node [draw,fill,shape=circle,color=black] (u5') {};
      \draw (4,0) node [draw,fill,shape=circle,color=black] (u6) {};      

      \draw (u3) edge [semithick] (u4)
      (u4) edge [semithick] (u5)
      (u5) edge [semithick] (u6)
      (u3) edge [semithick] (u4')
      (u4') edge [semithick] (u5')
      (u5') edge [semithick] (u6);

      \draw (5,0) node {\LARGE \dots \dots};

      \draw (6,0) node [draw,fill,shape=circle,color=black] (u6') {};      
      \draw (6.5,0.866) node [draw,fill,shape=circle,color=black] (u7) {};
      \draw (7.5,0.866) node [draw,fill,shape=circle,color=black] (u8) {};
      \draw (6.5,-0.866) node [draw,fill,shape=circle,color=black] (u7') {};
      \draw (7.5,-0.866) node [draw,fill,shape=circle,color=black] (u8') {};
      \draw (8,0) node [draw,fill,shape=circle,color=black] (u9) {};      

      \draw (u6') edge [semithick] (u7)
      (u7) edge [semithick] (u8)
      (u8) edge [semithick] (u9)
      (u6') edge [semithick] (u7')
      (u7') edge [semithick] (u8')
      (u8') edge [semithick] (u9);

      \draw (8.5,0.866) node [draw,fill,shape=circle,color=black] (u10) {};
      \draw (9.5,0.866) node [draw,fill,shape=circle,color=black] (u11) {};
      \draw (8.5,-0.866) node [draw,fill,shape=circle,color=black] (u10') {};
      \draw (9.5,-0.866) node [draw,fill,shape=circle,color=black] (u11') {};
      \draw (10,0) node [draw,fill,shape=circle,color=black] (u12) {};      
      \draw (11,0) node [draw,fill,shape=circle,color=black, label=0:{\Large $v$}] (v) {};

      \draw (u9) edge [semithick] (u10)
      (u10) edge [semithick] (u11)
      (u11) edge [semithick] (u12)
      (u9) edge [semithick] (u10')
      (u10') edge [semithick] (u11')
      (u11') edge [semithick] (u12)
      (u12) edge [semithick] (v);   
    \end{tikzpicture}  
    \caption{The hexagon gadget, with $2\ell$ hexagons in the middle.}
    \label{fig:hex-gadget}
  \end{figure}
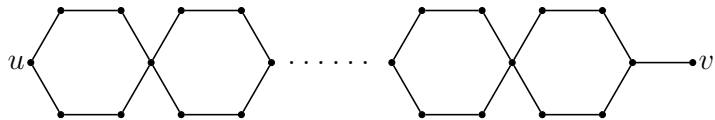  

  Let $G$ be an instance for odd cycle covers.  
  Replace each edge $uv\in E(G)$ with a copy of the gadget, and consider any P2M cover in the resulting graph $G'$.
  The possible configurations induced by this cover on the gadget with end points $u$ and $v$ can be partitioned into three types (see \Cref{fig:types-PMU}):
  \begin{itemize}
    \item Type P[ath].  There is a path from $u$ to $v$ together with some isolated edges.
      Note that there are two choices for the path in each hexagon.
    \item Type M[atched].  There is no path from $u$ to $v$, but $u$ and $v$ are both covered (by isolated edges).
      Note that there are two choices in every other hexagon, starting from the first hexagon (counting from $u$).
    \item Type U[nmatched].  There is no path from $u$ to $v$, and $u$ and $v$ are both uncovered.
      Note that there are two choices in every other hexagon, starting from the second hexagon (counting from $u$).
  \end{itemize}
  Note that this list is exhaustive, since it is impossible to match exactly one of $u,v$. 

  \begin{figure}[htbp]
    \centering
    \begin{tikzpicture}[scale=0.8, inner sep=1pt, transform shape]
      \draw (0,0) node [draw,fill,shape=circle,color=black, label=180:{\Large $u$}] (u) {}; 
      \draw (0.5,0.866) node [draw,fill,shape=circle,color=black] (u1) {};
      \draw (1.5,0.866) node [draw,fill,shape=circle,color=black] (u2) {};
      \draw (0.5,-0.866) node [draw,fill,shape=circle,color=black] (u1') {};
      \draw (1.5,-0.866) node [draw,fill,shape=circle,color=black] (u2') {};
      \draw (2,0) node [draw,fill,shape=circle,color=black] (u3) {};

      \draw (u) edge [thick, red] (u1)
      (u1) edge [thick, red] (u2)
      (u2) edge [thick, red] (u3)
      (u) edge [semithick] (u1')
      (u1') edge [thick, red] (u2')
      (u2') edge [semithick] (u3);

      \draw (2.5,0.866) node [draw,fill,shape=circle,color=black] (u4) {};
      \draw (3.5,0.866) node [draw,fill,shape=circle,color=black] (u5) {};
      \draw (2.5,-0.866) node [draw,fill,shape=circle,color=black] (u4') {};
      \draw (3.5,-0.866) node [draw,fill,shape=circle,color=black] (u5') {};
      \draw (4,0) node [draw,fill,shape=circle,color=black] (u6) {};      

      \draw (u3) edge [thick, red] (u4)
      (u4) edge [thick, red] (u5)
      (u5) edge [thick, red] (u6)
      (u3) edge [semithick] (u4')
      (u4') edge [thick, red] (u5')
      (u5') edge [semithick] (u6);

      \draw (5,0) node {\LARGE \dots \dots};

      \draw (6,0) node [draw,fill,shape=circle,color=black] (u6') {};      
      \draw (6.5,0.866) node [draw,fill,shape=circle,color=black] (u7) {};
      \draw (7.5,0.866) node [draw,fill,shape=circle,color=black] (u8) {};
      \draw (6.5,-0.866) node [draw,fill,shape=circle,color=black] (u7') {};
      \draw (7.5,-0.866) node [draw,fill,shape=circle,color=black] (u8') {};
      \draw (8,0) node [draw,fill,shape=circle,color=black] (u9) {};      

      \draw (u6') edge [semithick] (u7)
      (u7) edge [thick, red] (u8)
      (u8) edge [semithick] (u9)
      (u6') edge [thick, red] (u7')
      (u7') edge [thick, red] (u8')
      (u8') edge [thick, red] (u9);

      \draw (8.5,0.866) node [draw,fill,shape=circle,color=black] (u10) {};
      \draw (9.5,0.866) node [draw,fill,shape=circle,color=black] (u11) {};
      \draw (8.5,-0.866) node [draw,fill,shape=circle,color=black] (u10') {};
      \draw (9.5,-0.866) node [draw,fill,shape=circle,color=black] (u11') {};
      \draw (10,0) node [draw,fill,shape=circle,color=black] (u12) {};      
      \draw (11,0) node [draw,fill,shape=circle,color=black, label=0:{\Large $v$}] (v) {};

      \draw (u9) edge [thick, red] (u10)
      (u10) edge [thick, red] (u11)
      (u11) edge [thick, red] (u12)
      (u9) edge [semithick] (u10')
      (u10') edge [thick, red] (u11')
      (u11') edge [semithick] (u12)
      (u12) edge [thick, red] (v);

      \draw (5,-1.5) node {Type P};

      \begin{scope}[shift = {(0,-4)}]
        \draw (0,0) node [draw,fill,shape=circle,color=black, label=180:{\Large $u$}] (u) {}; 
        \draw (0.5,0.866) node [draw,fill,shape=circle,color=black] (u1) {};
        \draw (1.5,0.866) node [draw,fill,shape=circle,color=black] (u2) {};
        \draw (0.5,-0.866) node [draw,fill,shape=circle,color=black] (u1') {};
        \draw (1.5,-0.866) node [draw,fill,shape=circle,color=black] (u2') {};
        \draw (2,0) node [draw,fill,shape=circle,color=black] (u3) {};
  
        \draw (u) edge [thick, red] (u1)
        (u1) edge [semithick] (u2)
        (u2) edge [thick, red] (u3)
        (u) edge [semithick] (u1')
        (u1') edge [thick, red] (u2')
        (u2') edge [semithick] (u3);
  
        \draw (2.5,0.866) node [draw,fill,shape=circle,color=black] (u4) {};
        \draw (3.5,0.866) node [draw,fill,shape=circle,color=black] (u5) {};
        \draw (2.5,-0.866) node [draw,fill,shape=circle,color=black] (u4') {};
        \draw (3.5,-0.866) node [draw,fill,shape=circle,color=black] (u5') {};
        \draw (4,0) node [draw,fill,shape=circle,color=black] (u6) {};      
  
        \draw (u3) edge [semithick] (u4)
        (u4) edge [thick, red] (u5)
        (u5) edge [semithick] (u6)
        (u3) edge [semithick] (u4')
        (u4') edge [thick, red] (u5')
        (u5') edge [semithick] (u6);
  
        \draw (5,0) node {\LARGE \dots \dots};
  
        \draw (6,0) node [draw,fill,shape=circle,color=black] (u6') {};      
        \draw (6.5,0.866) node [draw,fill,shape=circle,color=black] (u7) {};
        \draw (7.5,0.866) node [draw,fill,shape=circle,color=black] (u8) {};
        \draw (6.5,-0.866) node [draw,fill,shape=circle,color=black] (u7') {};
        \draw (7.5,-0.866) node [draw,fill,shape=circle,color=black] (u8') {};
        \draw (8,0) node [draw,fill,shape=circle,color=black] (u9) {};      
  
        \draw (u6') edge [semithick] (u7)
        (u7) edge [thick, red] (u8)
        (u8) edge [semithick] (u9)
        (u6') edge [thick, red] (u7')
        (u7') edge [semithick] (u8')
        (u8') edge [thick, red] (u9);
  
        \draw (8.5,0.866) node [draw,fill,shape=circle,color=black] (u10) {};
        \draw (9.5,0.866) node [draw,fill,shape=circle,color=black] (u11) {};
        \draw (8.5,-0.866) node [draw,fill,shape=circle,color=black] (u10') {};
        \draw (9.5,-0.866) node [draw,fill,shape=circle,color=black] (u11') {};
        \draw (10,0) node [draw,fill,shape=circle,color=black] (u12) {};      
        \draw (11,0) node [draw,fill,shape=circle,color=black, label=0:{\Large $v$}] (v) {};

        \draw (u9) edge [semithick] (u10)
        (u10) edge [thick, red] (u11)
        (u11) edge [semithick] (u12)
        (u9) edge [semithick] (u10')
        (u10') edge [thick, red] (u11')
        (u11') edge [semithick] (u12)
        (u12) edge [thick, red] (v);

        \draw (5,-1.5) node {Type M};        
      \end{scope}
      \begin{scope}[shift = {(0,-8)}]
        \draw (0,0) node [draw,fill,shape=circle,color=black, label=180:{\Large $u$}] (u) {}; 
        \draw (0.5,0.866) node [draw,fill,shape=circle,color=black] (u1) {};
        \draw (1.5,0.866) node [draw,fill,shape=circle,color=black] (u2) {};
        \draw (0.5,-0.866) node [draw,fill,shape=circle,color=black] (u1') {};
        \draw (1.5,-0.866) node [draw,fill,shape=circle,color=black] (u2') {};
        \draw (2,0) node [draw,fill,shape=circle,color=black] (u3) {};
  
        \draw (u) edge [semithick] (u1)
        (u1) edge [thick, red] (u2)
        (u2) edge [semithick] (u3)
        (u) edge [semithick] (u1')
        (u1') edge [thick, red] (u2')
        (u2') edge [semithick] (u3);
  
        \draw (2.5,0.866) node [draw,fill,shape=circle,color=black] (u4) {};
        \draw (3.5,0.866) node [draw,fill,shape=circle,color=black] (u5) {};
        \draw (2.5,-0.866) node [draw,fill,shape=circle,color=black] (u4') {};
        \draw (3.5,-0.866) node [draw,fill,shape=circle,color=black] (u5') {};
        \draw (4,0) node [draw,fill,shape=circle,color=black] (u6) {};      
  
        \draw (u3) edge [semithick] (u4)
        (u4) edge [thick, red] (u5)
        (u5) edge [semithick] (u6)
        (u3) edge [thick, red] (u4')
        (u4') edge [semithick] (u5')
        (u5') edge [thick, red] (u6);
  
        \draw (5,0) node {\LARGE \dots \dots};
  
        \draw (6,0) node [draw,fill,shape=circle,color=black] (u6') {};      
        \draw (6.5,0.866) node [draw,fill,shape=circle,color=black] (u7) {};
        \draw (7.5,0.866) node [draw,fill,shape=circle,color=black] (u8) {};
        \draw (6.5,-0.866) node [draw,fill,shape=circle,color=black] (u7') {};
        \draw (7.5,-0.866) node [draw,fill,shape=circle,color=black] (u8') {};
        \draw (8,0) node [draw,fill,shape=circle,color=black] (u9) {};      
  
        \draw (u6') edge [semithick] (u7)
        (u7) edge [thick, red] (u8)
        (u8) edge [semithick] (u9)
        (u6') edge [semithick] (u7')
        (u7') edge [thick, red] (u8')
        (u8') edge [semithick] (u9);
  
        \draw (8.5,0.866) node [draw,fill,shape=circle,color=black] (u10) {};
        \draw (9.5,0.866) node [draw,fill,shape=circle,color=black] (u11) {};
        \draw (8.5,-0.866) node [draw,fill,shape=circle,color=black] (u10') {};
        \draw (9.5,-0.866) node [draw,fill,shape=circle,color=black] (u11') {};
        \draw (10,0) node [draw,fill,shape=circle,color=black] (u12) {};      
        \draw (11,0) node [draw,fill,shape=circle,color=black, label=0:{\Large $v$}] (v) {};

        \draw (u9) edge [semithick] (u10)
        (u10) edge [thick, red] (u11)
        (u11) edge [semithick] (u12)
        (u9) edge [thick, red] (u10')
        (u10') edge [semithick] (u11')
        (u11') edge [thick, red] (u12)
        (u12) edge [semithick] (v);

        \draw (5,-1.5) node {Type U};
      \end{scope}
    \end{tikzpicture}  
    \caption{Three types of covers (coloured in red) for the gadget.}
    \label{fig:types-PMU}
  \end{figure}
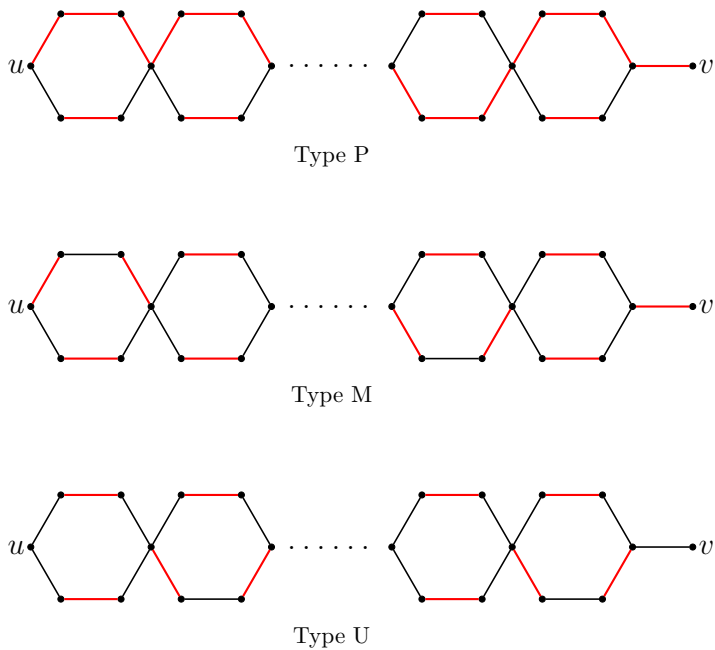  

  If there are $2\ell$ hexagons in the gadget then there are $4^\ell$ configurations of Type~P, $2^\ell$ of type~M
  and $2^\ell$ of type U.  
  
  Let's return to the P2M cover of $G'$.
  The edges in~$G$ corresponding to Type~P configurations in~$G'$ form a collection of disjoint odd cycles in~$G$.  
  (Note that the path in a Type P configuration is of odd length.)  
  Moreover the edges of~$G$ corresponding to Type~$M$ configurations form a collection of isolated edges, which are disjoint from the odd cycles.  
  Finally, the collection of all odd cycles and isolated edges described above cover all the vertices of~$G$, and hence form a P2M cover of~$G$.
  Conversely, any P2M cover in $G$ can be lifted to a P2M cover of $G'$ by choosing a Type~P configuration in~$G'$ for every cycle edge of~$G$, 
  a Type~M configuration in~$G'$ for every isolated edge in $G$, and a Type~U configuration for every other edge of $G$.  
  Suppose the P2M cover of $G$ has $k$ cycle edges, and hence $m-k$ other edges, where $m=\abs{E(G)}$.
  Then the number of P2M covers in $G'$ that correspond to this particular P2M cover in~$G$ is $4^{\ell k}2^{\ell(m-k)}=2^{\ell m}2^{\ell k}$. 
  Thus, for sufficiently large $\ell$, we expect $\frac{\Zptm(G')}{2^{\ell m}2^{\ell n}}$ to be a good approximation to $\Zoc(G)$, where $n=|V(G)|$.

  To be more specific, we will choose $\ell\ge m+2$.
  The discussion above implies that $\Zptmk{n}(G') = 2^{\ell m}2^{\ell n}\Zoc(G)$,
  where $\Zptmk{k}(G')$ denotes the number of P2M covers of~$G'$ with $k$~edges of the Type~P{}.
  Note that $\Zptm(G')=\sum_{k=0}^{n}\Zptmk{k}(G')$.
  Moreover, as $\Zptm(G)\le 2^m$,
  \begin{align}\label{eqn:sum-p2m-n-1}
    \sum_{k=0}^{n-1}\Zptmk{k}(G') \le 2^m 2^{\ell m} 2^{\ell (n-1)} \le \frac{2^{\ell(m+n)}}{4}.
  \end{align}
  Suppose we want to decide whether $\Zoc(G)$ is zero or non-zero, and that we have available an approximation 
$\widetilde{Z}$ to $\Zptm(G')$ that satisfies $\frac12\le \frac{\widetilde{Z}}{\Zptm(G')}\le \frac32$.
  There are two cases.
  \begin{itemize}
    \item If $\widetilde{Z}\ge 2^{\ell(m+n)-1}$,
      then $\Zptm(G')\ge \frac{2^{\ell(m+n)}}{3} > \frac{2^{\ell(m+n)}}{4}$,
      which by \eqref{eqn:sum-p2m-n-1} implies that $\Zptmk{n}(G')> 0$ and hence $\Zoc(G)\ge 1$.
    \item Otherwise $\widetilde{Z} < 2^{\ell(m+n)-1}$,
      and thus $\Zptm(G') < 2\times 2^{\ell(m+n)-1}= 2^{\ell(m+n)}$.
      Since $\Zptmk{n}(G)$ is divisible by $2^{\ell(m+n)}$ it follows that $\Zptmk{n}(G')=0$ and $\Zoc(G)=0$.
  \end{itemize}
Thus we have the desired reduction from an $\NP$-complete decision problem (Hamiltonian path) to the problem of approximating the number of P2M covers in a graph within relative error~2.  It follows immediately that the existence of a FPTAS for $\Zptm(G)$ would imply $\Ptime=\NP$.  Under the weaker assumption that an FPRAS exists, we could,
with probability at least $\frac34$, correctly decide an $\NP$-complete problem.  Thus, any problem in~$\NP$ could be solved in polynomial time by a randomised algorithm with two-sided error or, formally, $\NP\subseteq\BPP$.  It is a standard fact that this last inclusion implies $\RP=\NP$.
\end{proof}

The `powering' construction (\Cref{fig:hex-gadget}) used here is similar to those used in a similar context by Khachiyan~\cite{Khachiyan} and Najt~\cite{Najt}, and more generally by Jerrum, Valiant and Vazirani~\cite{JVV}.  


\section{Transposes of network matrices}\label{sec:transposenetwork}
As we noted, one way to specify a 0/1 polytope is by giving a system of linear inequalities $Ax\leq b$ where the matrix~$A$ is totally unimodular.   Network matrices and their transposes are interesting subclasses of totally unimodular matrices.  In some sense, network matrices and their transposes, together with a certain $5\times 5$ matrix, generate all totally unimodular matrices~\cite{Truemper}.  We begin by defining the class of network matrices. 

Suppose we have a directed graph $(V,E)$ and a directed tree $(V,T)$ on the same vertex set.  (The orientations of the arcs of the tree are arbitrary, and do not necessarily point towards a specific root vertex.  It is convenient to allow parallel arcs in the graph $(V,E)$.)  Given an arc $e=uv\in E$ and an arc $t\in T$, let $\Pi$ be the path from $u$ to $v$ in $T$.  Then define the $|T|\times|E|$ matrix~$A$ by 
$$
A_{te}=\begin{cases}
+1,&\text{if $t$ occurs in a forward direction in $\Pi$;}\\
-1,&\text{if $t$ occurs in a backward direction in $\Pi$;}\\
0,&\text{otherwise}.
\end{cases}
$$
The matrix $A$ is the \emph{network matrix generated by} $(V,E)$ and  $(V,T)$.  An integer matrix is said to be a network matrix if it is generated in this way. 
\begin{proposition}\label{prop:networkimpliesTU}
Any network matrix is totally unimodular.
\end{proposition}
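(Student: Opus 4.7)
The plan is to express the network matrix $A$ as $A=(N'_T)^{-1}N'_E$ for suitable submatrices of the node-arc incidence matrix of the combined digraph $(V,T\cup E)$, and then invoke the closure of total unimodularity under the operation $[M_1\mid M_2]\mapsto M_1^{-1}M_2$ when $M_1$ is square and non-singular.

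Setting things up, let $N$ be the $V\times(T\cup E)$ node-arc incidence matrix of the directed graph $(V,T\cup E)$: $N_{v,a}=+1$ if arc $a$ enters $v$, $-1$ if it leaves $v$, and $0$ otherwise. Fix any vertex $r\in V$, delete the row of $N$ indexed by $r$, and split the columns according to $T$ versus $E$ to obtain $N'=[N'_T\mid N'_E]$.

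\emph{First ingredient.} The matrix $N$ is totally unimodular. This is the classical calculation: induct on the order $k$ of a square submatrix~$B$. Every column of $N$ has at most two non-zero entries, one being $+1$ and the other $-1$. If some column of $B$ has at most one non-zero then expand along it and apply the inductive hypothesis to the cofactor; otherwise every column of $B$ sums to zero, so the rows of $B$ are linearly dependent and $\det B=0$.

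\emph{Second ingredient.} $A=(N'_T)^{-1}N'_E$. Since $T$ is a spanning tree on $V$, the matrix $N'_T$ is square of order $|V|-1$ and non-singular (for instance because its columns, after dropping the $r$-row, span $\mathbb{R}^{V\setminus\{r\}}$, as one sees by contracting leaves of $T$). For a graph arc $e=uv\in E$, the column $N'_E[\cdot,e]$ is the signed indicator $\mathbf{1}_v-\mathbf{1}_u$ with the $r$-entry removed. Walking along the unique path $\Pi$ from $u$ to $v$ in $T$ gives the telescoping identity
\[
  \mathbf{1}_v-\mathbf{1}_u \;=\; \sum_{t\in\Pi\text{ forward}} N'_T[\cdot,t] \;-\; \sum_{t\in\Pi\text{ backward}} N'_T[\cdot,t],
\]
and the coefficients on the right are precisely the entries $A_{te}$ in the column of $A$ indexed by~$e$. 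Hence $N'_E=N'_T A$, which gives the claimed identity.

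Finally I would invoke the standard closure lemma: if $[M_1\mid M_2]$ is TU and $M_1$ is square and non-singular, then $M_1^{-1}M_2$ is TU; see e.g.\ Schrijver~\cite[Ch.~19]{SchrijverA}. Combining this with the two ingredients yields that $A$ is TU. The main obstacle is really just the closure lemma, but it is well documented in the TU literature and has a short direct proof: every subdeterminant of $M_1^{-1}M_2$ can, via a Schur-complement style identity, be written as a ratio of two subdeterminants of the TU matrix $[M_1\mid M_2]$, each lying in $\{-1,0,+1\}$, with the denominator equal to $\det M_1=\pm1$.
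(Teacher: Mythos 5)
Your proof is correct. Note that the paper itself gives no proof of this proposition---it merely cites Schrijver, Thm.~13.20---and the argument you supply (factor the network matrix as $(N'_T)^{-1}N'_E$ through the reduced node--arc incidence matrix of $(V,T\cup E)$, then invoke total unimodularity of incidence matrices together with closure of TU under $[M_1\mid M_2]\mapsto M_1^{-1}M_2$ for nonsingular $M_1$) is precisely the standard argument that citation points to; your sign conventions and the telescoping identity both check out. The one place you gesture rather than prove is the closure lemma, but your ratio-of-minors sketch is sound and can be made fully precise: given a $k\times k$ submatrix of $[I\mid M_1^{-1}M_2]$ on rows $R$ and columns $C$, pad it with the unit columns indexed by $[m]\setminus R$ to get an $m\times m$ block $Z'$ of $[I\mid M_1^{-1}M_2]$; then $\det Z'=\pm\det([I\mid M_1^{-1}M_2][R,C])$ by cofactor expansion along the added unit columns, while also $Z'=M_1^{-1}\cdot[M_1\mid M_2][:,C']$ for the corresponding column set $C'$, so $\det Z'=\det(M_1)^{-1}\det([M_1\mid M_2][:,C'])\in\{-1,0,1\}$.
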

See \cite[Thm 13.20]{SchrijverA} for a proof.  Of course the transpose of a network matrix is also totally unimodular.  The following easy lemma will be useful.  Amongst other things, it tells us that if we have a system of inequalities defined by network matrix, or the transpose of a network matrix, then we can freely add additional bounds such as $\bfz\leq x\leq\bfo$.

\begin{lemma}\label{lem:preserve}
The property of being a network matrix is preserved under the following operations:
\begin{enumerate}[label=(\alph*)]
\item duplicating a row or a column;
\item negating a row or a column;
\item extending the matrix with a unit row or column (one which has a 1 in a single location and zeros elsewhere).
\end{enumerate} 
\end{lemma}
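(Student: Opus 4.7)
The plan is to treat each of (a), (b), (c) by producing, for each, an explicit modification of the underlying network representation $((V,E),(V,T))$ that realises the stated transformation on the matrix $A$. In every case, correctness boils down to tracking how directed paths in $T$ change (or stay the same) under the modification, and then reading off the effect column by column and row by row.

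For (a), duplicating a column is essentially free: the definition of network matrix allows parallel arcs in $E$, so I would simply add a second copy of the arc, with the same endpoints and orientation, yielding an identical column. To duplicate the row indexed by a tree arc $t=ab$, I would subdivide $t$ into two arcs $t_1=ax$ and $t_2=xb$ (same orientations) via a new vertex $x$. Because $x$ is isolated in the graph, the path in the new tree between the endpoints of any $e\in E$ either uses $t_1$ and $t_2$ together, with the same sign as the original path used $t$, or uses neither. Hence the rows for $t_1$ and $t_2$ are identical copies of the old row for $t$, and no other row changes.

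For (b), reversing the orientation of a graph arc from $uv$ to $vu$ reverses the entire $u$-to-$v$ path and hence flips the sign of the corresponding column while leaving every other column untouched. Reversing the orientation of a tree arc $t$ swaps the roles of ``forward'' and ``backward'' for $t$ in every path, so the row indexed by $t$ flips sign while the other rows are unchanged. For the unit column in (c), with the $+1$ to appear in row $t=ab$ (where $t$ is oriented from $a$ to $b$), I would add a fresh graph arc from $a$ to $b$; the unique $a$-to-$b$ path in $T$ is then just $t$ traversed forward, giving a column with $+1$ in row $t$ and zeros elsewhere.

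The only delicate case is the unit row, and this is the step I expect to be the main obstacle: naively appending a new tree arc as a leaf yields the all-zero row. My plan here is a vertex-split. Given the target column $e^*=uv\in E$ in which the $+1$ should appear, introduce a new vertex $u'$, add a tree arc $t'=u'u$ oriented from $u'$ to $u$, and re-route the graph arc $e^*$ so that it now runs from $u'$ to $v$. Since $u'$ is a leaf of the new tree, the path between any two vertices of $V$ in the new tree coincides with the original path in $T$; in particular every column other than $e^*$ is preserved, and the column for $e^*$ keeps its original entries in the original rows, because the new $u'$-to-$v$ path is $t'$ followed by the old $u$-to-$v$ path. The new row indexed by $t'$ then picks up $+1$ in column $e^*$ and $0$ in every other column, as required.
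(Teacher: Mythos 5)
Your proof is correct and takes essentially the same approach as the paper's: duplicate a column by adding a parallel arc, duplicate a row by subdividing a tree arc, negate by reversing arc orientations, add a unit column by inserting a graph arc parallel to a tree arc, and add a unit row by attaching a new leaf to an endpoint of the target graph arc and re-routing that arc to the leaf. The only cosmetic difference is in the unit-row case, where you hang the new leaf off the tail $u$ of $e^{*}=uv$ whereas the paper hangs it off the head $v$; the two are mirror-image variants of the same construction.
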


\begin{proof}
Let $A$ be a network matrix defined by the directed graph $(V,E)$ and directed tree $(V,T)$.  Recall that rows of $A$ correspond to tree arcs $t\in T$ and columns to graph arcs $e\in E$.
\begin{enumerate}[label=(\alph*)]
\item To duplicate a row indexed by arc $t=uv\in T$, introduce a new vertex $w$ and set $T:=T\cup\{uw,wv\}\setminus\{uv\}$.  To duplicate a column indexed by arc $e\in E$, introduce a new arc $e'$ parallel to~$e$. 
\item To negate a row indexed by $t$, reverse the direction of arc $t\in T$.  To negate a column indexed by~$e$, reverse the direction of arc $e\in E$.
\item To introduce a new row with a 1 in the column indexed by $e=uv\in E$, introduce a new vertex~$w$ and set $E:=E\cup\{uw\}\setminus\{uv\}$ and $T:= T\cup\{vw\}$.  To introduce a new column with a 1 in the row indexed by $t=uv\in T$, introduce a new arc $uv$ to~$E$.
\end{enumerate}
It is easy to check that these actions have the desired effect on the matrix~$A$.
\end{proof}

So now consider a 0/1~polytope~$P$ defined by inequalities $Ax\leq b$ where $A$ is the \emph{transpose} of a network matrix.  
Let us consider the defining equations of the facets of~$P$ in terms of $(V,E)$ and $(V,T)$.
The arcs in $T$ correspond to variables and those in $E$ to (left-hand sides of) inequalities.  Let $uv\in E$ be an arc in the graph.  The variables that occur in the corresponding inequality are the ones encountered when tracing out the unique path from $u$ to $v$ in $(V,T)$.  
The coefficient of a variable is $+1$ or $-1$ depending on whether the arc in the tree is aligned with or against the direction of the path.  

\begin{example}[independent sets in a bipartite graph]\label{ex:BIS}
  Let $B=(U\cupdot U',F)$ be a bipartite (undirected) graph,
  where $U=\{u_1,\ldots,u_n\}$ and $U'=\{u_1',\ldots,u_m'\}$ are the parts of the bipartition of the vertex set.  
  We encode the independent sets in $B$ as vertices of a polytope defined by the transpose of a network matrix.  As above, we specify this matrix by giving the graph $(V,E)$ and tree $(V,T)$.
  Introduce a new vertex $r$, and let $V=U\cup\{r\}\cup U'$ and $T=\{u_ir:1\leq i\leq n\}\cup\{ru_j':1\leq j\leq m\}$.  
  Let the arc set~$E$ be obtained from~$F$ by simply orienting all edges in~$F$ from $U$ to~$U'$. 
  Each arc $u_ir\in T$ (respectively, $ru_j'\in T$) corresponds to a variable $x_i$ (respectively, $y_j$).  
  Consider the network matrix~$A$ defined by $(V,E)$ and $(V,T)$.
  Each inequality in 
  $\transpose{A}\transpose{(x_1,\ldots,x_n,y_1,\ldots,y_m)}\le\*1$ 
  is of the form $x_i+y_j\leq 1$ for some $u_iu_j'\in F$.  
  Then we add the inequalities $0\leq x_i,y_j\leq 1$ for every $i\in[n]$ and $j\in[m]$,
  which can be done while maintaining the defining matrix to be the transpose of a network matrix,
  by Lemma~\ref{lem:preserve}(c).
  These inequalities define the independent (or stable) set polytope of the bipartite graph~$B$ \cite[Thm.~19.7]{SchrijverA}.  This fact also follows easily from total modularity of the system of inequalities, which itself follows from Proposition~\ref{prop:networkimpliesTU}.
\end{example}

Our goal in this section is to precisely locate the complexity of \nPolytopeVertices, when the matrix $A$ is the transpose of a network matrix.  For the upper bound (which is the non-trivial direction) we use an approximation-preserving reduction to the following problem. 

\begin{description}[itemsep=0pt]
\item [Problem.] \npnSAT.
\item [Instance.]  A CNF Boolean formula $\varphi$ in which each clause contains at most one positive literal and at most one negative literal. 
\item [Output.]  The number of satisfying assignments of $\varphi$.
\end{description}
We know that $\npnSAT$ is equivalent to $\nBIS$ under polynomial-time approximation-preserving reductions~\cite[Thm~5]{Relative}.

In the following theorem we may assume, by Lemma~\ref{lem:preserve} that the matrix~$A$ incorporates the constraints $\bfz\leq x\leq\bfo$.  In light of Proposition~\ref{prop:networkimpliesTU}, this provides the promise demanded by \nPolytopeVertices. 

\begin{theorem}\label{thm:BISequiv}
When $A$ is restricted to be the transpose of a network matrix, the problem \nPolytopeVertices{} is equivalent under polynomial-time approximation-preserving reductions to \nBIS.
\end{theorem}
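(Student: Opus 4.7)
The plan has two directions. The easy direction, $\nBIS\le_\mathrm{AP}\nPolytopeVertices$ restricted to transposes of network matrices, is already contained in Example~\ref{ex:BIS}: that construction parsimoniously encodes a bipartite independent-set instance as a vertex-counting instance whose defining matrix is $\transpose{N}$ for a network matrix~$N$. Parsimonious reductions are AP-reductions, so this establishes the lower bound.

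For the upper bound I plan to give a parsimonious reduction from \nPolytopeVertices{} on transposes of network matrices to \npnSAT, which is AP-equivalent to \nBIS{} by \cite[Thm~5]{Relative}. Let $Ax\le b$ be the given instance with $A=\transpose{N}$, where $N$ is the network matrix generated by the directed graph $(V,E)$ and directed tree $(V,T)$. Freely augment the system with the bounds $\bfz\le x\le\bfo$; by Lemma~\ref{lem:preserve}(c) the matrix of the augmented system is still the transpose of a network matrix. Under the 0/1 promise the vertices of~$P$ are then exactly the 0/1 solutions of $Ax\le b$.

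Now root~$T$ at an arbitrary vertex~$r$, set $D_t=+1$ if the tree arc~$t$ points away from~$r$ and $D_t=-1$ otherwise, and make the 0/1 change of variable $y_t=x_t$ when $D_t=+1$ and $y_t=1-x_t$ when $D_t=-1$. For a graph arc $e=uv\in E$ with tree-path $u\to w\to v$, where $w$ is the least common ancestor of $u$ and~$v$, the sign $\sigma_{e,t}$ in the row of~$A$ indexed by~$e$ equals $-D_t$ for $t$ on the $u$-to-$w$ leg and $+D_t$ for $t$ on the $w$-to-$v$ leg. A short calculation then collapses $\sum_t\sigma_{e,t}x_t\le b_e$ to the difference constraint $Y_v-Y_u\le b'_e$, where $Y_v:=\sum_{t\in\pi(r,v)}y_t$ is the sum of $y$-values along the tree path from~$r$ to~$v$ and $b'_e\in\Zset$ is an explicitly computable shift of $b_e$. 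This collapse of the transpose-of-network structure to a system of difference constraints on tree ``potentials'' is the one genuinely non-routine step; the rest is bookkeeping.

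To complete the reduction, encode the counting problem for the difference system as \npnSAT. For each $v\in V$ and each integer $\ell$ with $0\le\ell\le\mathrm{depth}(v)$, introduce a 0/1 indicator $z_{v,\ell}$ intended to mean ``$Y_v\ge\ell$''. The relations $Y_{\mathrm{parent}(v)}\le Y_v\le Y_{\mathrm{parent}(v)}+1$ become the clauses $\neg z_{\mathrm{parent}(v),\ell}\lor z_{v,\ell}$ and $\neg z_{v,\ell}\lor z_{\mathrm{parent}(v),\ell-1}$; monotonicity in~$\ell$ gives $\neg z_{v,\ell+1}\lor z_{v,\ell}$; and the graph-arc constraint $Y_v-Y_u\le b'_e$ gives $\neg z_{v,\ell}\lor z_{u,\ell-b'_e}$ for each relevant~$\ell$, with indices $\ell-b'_e\le 0$ dropped (under the convention $z_{u,j}=1$) and indices $\ell-b'_e>\mathrm{depth}(u)$ replaced by the unit clause $\neg z_{v,\ell}$. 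Unit clauses $z_{v,0}$ pin the trivial bound $Y_v\ge 0$. Every clause carries at most one positive and at most one negative literal, so the whole CNF is an \npnSAT{} instance of polynomial size. Its satisfying assignments are in bijection with the vertex set of~$P$ via $Y_v=\sum_{\ell\ge 1}z_{v,\ell}$ and the earlier substitution, giving the required parsimonious (hence AP-) reduction.
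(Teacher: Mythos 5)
Your proposal is correct and follows essentially the same route as the paper's proof: the easy direction via Example~\ref{ex:BIS}, and for the hard direction a change of variables from edge variables to tree ``potentials'' (turning every constraint into a difference constraint $z_v-z_u\le c$), followed by a unary encoding of those potentials by monotone Boolean indicators whose implication clauses land in \npnSAT, which is AP-equivalent to \nBIS{} by~\cite[Thm~5]{Relative}. The only cosmetic difference is that the paper works directly with the potentials $z_v$ defined by $z_r=0$, $z_v-z_u=x_t$ for $t=uv\in T$, which can be negative (ranging over $\{-n,\dots,n\}$, needing $2n$ indicators per vertex plus explicit unit clauses to pin $z_r=0$), whereas your flip $y_t=x_t$ or $1-x_t$ shifts each potential to $Y_v\in\{0,\dots,\mathrm{depth}(v)\}$, giving a marginally tighter encoding and making the root constraint automatic; both variants are polynomial-size and parsimonious, so the reductions are interchangeable.
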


\begin{proof}
  We have just seen in \Cref{ex:BIS} that \nBIS{} is essentially a special case of \nPolytopeVertices, so we just need to describe a polynomial-time approximation-preserving reduction from \nPolytopeVertices\ to $\npnSAT$ in the case that $A$ is the transpose of a network matrix.  The reduction exploits a construction from Chen, Dyer, Goldberg, Jerrum, Lu, McQuillan and Richerby~\cite[Lem.\ 46]{ChenEtAl}. 

Let the polytope $P$ be an instance of \nPolytopeVertices\ defined by a matrix~$A$ and vector~$b$, where $A$ is the transpose of a network matrix.  Suppose in turn that $A$ is specified by the directed tree $(V,T)$ and directed graph $(V,E)$ on the common vertex set~$V$.   
Recall that variables are associated with arcs in $T$, thus:  $\{x_t:t\in T\}$.  Choose an arbitrary root $r\in V$ as reference point.  The first step is to make a change of variables.  Introduce a new set of variables $\{z_v:v\in V\}$, and define $z_r=0$ and $z_v-z_u=x_t$ for all $t=uv\in T$.  Thus 
\begin{equation}\label{eq:zvdefn}
z_v=\epsilon_{v_0v_1} x_{v_0v_1} + \epsilon_{v_1v_2} x_{v_1v_2}+ \cdots + \epsilon_{v_{\ell-1}v_\ell} x_{v_{\ell-1}v_\ell},
\end{equation}
where $(r=v_0,v_1,\ldots,v_\ell=v)$ is the path from $r$ to $v$ in the tree, and the coefficient $\epsilon_{v_{i-1}v_i}$ associated with the $i$th term is $+1$ if arc $v_{i-1}v_i$ is traversed in the forward direction and $-1$ otherwise.  Note that the variables $\{x_t:t\in T\}$ determine the variables $\{z_v:v\in V\}$ and vice versa.

We now observe that the inequality $L_e(x)\leq b_e$ defined by an arc $e=uv\in E$ takes the simple form $z_v-z_u\leq b_e$, when transformed to the new variables.
By definition, 
\begin{equation}\label{eq:zudefn}
z_u=\epsilon_{u_0u_1}x_{u_0u_1} + \epsilon_{u_1u_2}x_{u_1u_2}+ \cdots + \epsilon_{u_{k-1}u_k}x_{u_{k-1}u_k},
\end{equation}
where $(r=u_0,u_1,\ldots u_k=u)$ is the path from $r$ to $u$ in~$T$, and the signs are determined by the directions of the arcs.  Suppose $u_h=v_h$ is the lowest common ancestor of $u$ and~$v$ in~$T$. Subtracting \eqref{eq:zudefn} from \eqref{eq:zvdefn},
\begin{align*}
z_v-z_u&=\big(\epsilon_{v_0v_1} x_{v_0v_1} + \cdots + \epsilon_{v_{\ell-1}v_\ell} x_{v_{\ell-1}v_\ell}\big)-\big(\epsilon_{u_0u_1}x_{u_0u_1} + \cdots + \epsilon_{u_{k-1}u_k}x_{u_{k-1}u_k}\big)\\
&=\big(\epsilon_{v_hv_{h+1}} x_{v_hv_{h+1}} + \cdots + \epsilon_{v_{\ell-1}v_\ell} x_{v_{\ell-1}v_\ell}\big)-\big(\epsilon_{u_hu_{h+1}}x_{u_hu_{h+1}} + \cdots + \epsilon_{u_{k-1}u_k}x_{u_{k-1}u_k}\big)\\
&=\epsilon_{u_ku_{k-1}}x_{u_ku_{k-1}} + \cdots + \epsilon_{u_{h+1}u_h}x_{u_{h+1}u_h}+\epsilon_{v_hv_{h+1}}x_{v_hv_{h+1}} + \cdots + \epsilon_{v_{\ell-1}v_\ell} x_{v_{\ell-1}v_\ell},
\end{align*}
where the second equality results from cancellation along the common segment of the paths $(u_0,\ldots,u_h)=(v_0,\ldots,v_h)$, and the third from reversing the arcs along the subpath $(u_h,\ldots,u_k)$. (We view the variable $x_{uv}$ as being associated with the undirected edge $uv$, so $x_{uv}$ and $x_{vu}$ denote the same variable.)  But the final line above is just the left-hand side $L_e(x)$ of the linear inequality defined by the arc $e=uv\in E$.  Thus, in the new variables, the inequality reads $z_v-z_u\leq b_e$.  Summarising, integer solutions to the system of equations $\{z_v-z_u\leq b_e:uv=e\in E\}$ are in bijection with integer (and hence 0/1) solutions to $\{L_e(x)\leq b_e:e\in E\}$, which in turn are in bijection with the vertices of polytope~$P$.  (Note that 0/1 points lying in $P$ are necessarily vertices of~$P$.)

So it just remains to encode the inequalities 
\begin{equation}\label{eq:zineqs}
z_r=0\quad\text{and}\quad z_v-z_u\leq b_e,\>\text{for $e=uv\in E$}
\end{equation}
as clauses within an instance~$\varphi$ of $\npnSAT$.

To do this, introduce Boolean variables $\{\zeta_v^i:v\in V\text{ and }-n<i\leq n\}$ and start to build an instance $\varphi$ on this variable set by introducing clauses 
$$
\{\zeta_v^{i+1}\implies \zeta_v^i:v\in V\text{ and }-n<i<n\}.
$$  
Note that the clause $\zeta_v^{i+1}\implies \zeta_v^i$ is logically equivalent to $\neg \zeta_v^{i+1}\vee \zeta_v^i$, so has the correct syntactic form.  Also note that for each $v\in V$ there are $2n+1$ consistent assignments to the variables $\{\zeta_v^i:-n<i\leq n\}$, namely 
$$
(\zeta_v^{-n+1},\zeta_v^{-n+2},\ldots,\zeta_v^{n-1},\zeta_v^n)=\begin{cases}
(0,0,0,\ldots,0,0),\\
(1,0,0,\ldots,0,0),\\
(1,1,0,\ldots,0,0),\\
\qquad\vdots\\
(1,1,1,\ldots,1,0),\\
(1,1,1,\ldots,1,1),
\end{cases}
$$
where we associate false with~0 and true with~1.
We use these Boolean assignments to encode integer assignments in the range $\{-n,\ldots,+n\}$ to~$z_v$ via the correspondence
$$
\zeta_v^i=1\iff i\leq z_v,\quad\text{for all $-n<i\leq n$}.
$$
Note that, by construction, for all $v\in V$,
$$
|z_v|\leq \text{(length of the path in $T$ from $r$ to $v$)}\leq n,
$$
so our encoding covers the feasible range of $z_v$.

Next, we encode the $z$-inequalities \eqref{eq:zineqs} by adding extra clauses to~$\varphi$. Note that every $z$-inequality is of the form $z_v-z_u\leq c$, which is equivalent to the collection of clauses $\{\zeta_v^{i+c}\implies\zeta_u^i:-n<i,i+c\leq n\}$.  Finally, the equality $z_r=0$ is equivalent to the conjunction of clauses 
$$
\zeta_r^{-n+1},\;\zeta_r^{-n+2},\;\ldots,\;\zeta_r^{-1},\;\zeta_r^0,\;\neg\zeta_r^1,\;\neg\zeta_r^2\;\ldots,\;\neg\zeta_r^{n-1},\;\neg\zeta_r^n.
$$
This completes the construction of the instance $\varphi$ of $\npnSAT$.  We see that the number of satisfying assignments to~$\varphi$ is equal to the number of feasible solutions to the $z$-inequalities, which in turn is equal to the number of vertices of the polytope~$P$. The reduction is parsimonious (i.e., preserves the number of solutions) and hence is certainly approximation preserving.  
\end{proof}

We finish the section by noting that counting problems associated with 0/1 polytopes are self-reducible, in the sense that the set of vertices of a 0/1 polytope~$P$ can be expressed as the union of the vertices of two lower-dimensional polytopes $P_0$ and~$P_1$ obtained by intersection with planes of the form $x_n=0$ and $x_n=1$.  For all the counting problems considered here, the polytopes $P_0$ and $P_1$ come from the same class of polytopes as~$P$.  This is an easy observation for general 0/1 polytopes, and follows from Lemma~\ref{lem:preserve} for polytopes defined by (transposes of) network matrices.  For self-reducible problems, approximate counting and (almost) uniform sampling are related by polynomial-time reductions, as observed by Jerrum, Valiant and Vazirani~\cite{JVV}.

\section{Network matrices}
Now consider the defining equations $Ax\leq b$ of a polytope~$P$ when $A$ is a network matrix.  Relative to the previous section, the roles of variables and equations are reversed.  Variables now correspond to arcs in~$E$ and equations to arcs in~$T$.  Fix a tree arc $t\in T$ and define
\begin{align*}
F_t^+&=\big\{e=uv\in E:\text{the path from $u$ to $v$ passes through $t$ is the forward direction}\big\},\\
F_t^-&=\big\{e=uv\in E:\text{the path from $u$ to $v$ passes through $t$ is the backward direction}\big\},
\end{align*} 
where the paths in question are paths in the tree $(V,T)$.
Then the inequality defined by the arc $t$ is 
$$
L_t(x)=\sum_{e\in F_t^+}x_e-\sum_{e\in F_t^-}x_e\leq b_t.
$$
We are interested in the polytope defined by the system $\{L_t\leq b_t:t\in T\}$.

\begin{example}[The matching polytope for a bipartite graph]
  Given a bipartite graph~$B=(U\cupdot U',F)$ where $\abs{U}=\abs{U'}=n$, 
  we set up the same directed graph $(V,E)$ and tree $(V,T)$ as in Example~\ref{ex:BIS} for $\nBIS$.  
  For each edge $u_iu_j'\in F$ of~$B$ we introduce a variable $x_{ij}$ and associate it with the arc $u_iu_j'\in E$.

  Now, the inequality associated with arc $u_ir\in T$ (respectively, $ru_j'\in T$) of the directed tree has the form $\sum_{j:ij\in F}x_{ij}\leq c$ (respectively, $\sum_{i:ij\in F}x_{ij}\leq c$).  
  The defining inequalities of the matching polytope of $B$ are obtained by setting $c=1$ for all edges $u_iu_j\in F$, and adding the inequalities $x_{ij}\geq0$ for all $i,j\in[n]$.  To obtain the perfect matchings polytope, we simply include an inequality $\sum_{j:ij\in F}x_{ij}\geq1$ (respectively $\sum_{i:ij\in F}x_{ij}\geq1$) complementary to each inequality $\sum_{j:ij\in F}x_{ij}\leq1$ (respectively $\sum_{i:ij\in F}x_{ij}\leq1$).  By Lemma~\ref{lem:preserve}, the matrix~$A$ defining this augmented set of inequalities is still a network matrix.  The polytope defined by these inequalities is the (perfect) \emph{matching polytope} of the graph~$B$ \cite[Cor.~18.1b]{SchrijverA};  its vertices correspond to (perfect) matchings in~$B$.  This fact also follows easily from total unimodularity of the matrix~$A$.  
\end{example}

Now imagine that, for each arc $e=uv\in E$, we route $f_e$~units of flow from $u$ to~$v$ in the tree~$T$.  The total flow through tree arc $t\in T$ is then 
$$
\sum_{e\in F_t^+}f_e-\sum_{e\in F_t^-}f_e.
$$
So we can think of the vertex counting problem in terms of integer flows in a network defined on $(V,\overleftarrow{E}\cup T)$.  Arcs in $E$ are reversed (denoted $\overleftarrow{E}$) and have a lower bound of~0 and an upper bound of~1.  Arcs in $T$ have upper bounds defined by the right-hand sides of the inequalities;  thus arc $t\in T$ has an upper bound of~$b_t$.  An integer flow in the network is an assignment of integers (flows) to the arcs $\overleftarrow{E}\cup T$ that satisfies conservation of flow at the each of the vertices in~$V$, and that respects the bounds on each edge.

We claim that integer flows in the above network are in bijection with $0/1$ points in the polytope $P=\{x:\bfz\leq Ax\leq b\}$.  So if $P$ is known to be a 0/1 polytope then the integer flows are in bijection with the vertices of~$P$.  Given a 0/1 assignment to the variables $\{x_{uv}:uv\in E\}$ we construct a flow by forcing $x_{uv}$ units of flow through the arc $vu\in\overleftarrow{E}$.  This causes $L_t(x)$ units of fluid to flow through tree arc $t\in T$.  The resulting flow is legal if and only if $x\in P$.  Also, the above construction clearly gives a bijection between 0/1 points $x$ and legal integer flows in the network.     
Unfortunately, we don't know an FPRAS for integer flows at this level of generality.  However, we do have a positive result for a special case.

\begin{proposition}\label{prop:network}
There is an FPRAS for the following problem:  Given a network matrix~$A$, together with a promise that $P=\{x:\bfz\leq Ax\leq\bfo\}$ is a 0/1 polytope, return the number of vertices of~$P$.  
\end{proposition}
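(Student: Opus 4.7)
The plan is to leverage the flow reformulation that immediately precedes the statement. Under the promise that $P$ is a $0/1$ polytope, the vertices of $P$ are exactly its $0/1$ integer points, which correspond bijectively to $0/1$ integer flows in the digraph $D = (V, \overleftarrow{E}\cup T)$ with unit upper capacities on every arc: the promise forces capacity~$1$ on each arc of $\overleftarrow{E}$, and each tree arc already has $b_t = 1$ by hypothesis. Equivalently, the vertices of $P$ are in bijection with the \emph{Eulerian sub-digraphs} of $D$: subsets $S\subseteq A(D)$ satisfying $|S\cap\mathrm{in}(v)| = |S\cap\mathrm{out}(v)|$ at every vertex $v\in V$.

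From there, my approach is to give a polynomial-time approximation-preserving reduction from counting Eulerian sub-digraphs of $D$ to counting perfect matchings in an auxiliary bipartite graph $H$, for which an FPRAS is available from the result of Jerrum, Sinclair and Vigoda on approximating the permanent of a non-negative matrix. The graph $H$ would be built by a standard \emph{vertex-splitting} construction: each vertex $v$ of $D$ becomes a gadget with one \emph{out-port} per out-arc at $v$ on one side of $H$ and one \emph{in-port} per in-arc at $v$ on the other side; each arc $a = uv$ of $D$ contributes a \emph{selection edge} between $a$'s out-port at $u$ and $a$'s in-port at $v$; and within each vertex gadget, \emph{pass-through edges} allow unused in-ports and out-ports to pair off. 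Dummy port-pairs are introduced at each vertex so that in- and out-degrees are equalised and a perfect matching can exist at all.

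The hard part will be making the reduction parsimonious. A plain complete-bipartite pass-through gadget at $v$ would count each Eulerian sub-digraph $S$ with multiplicity $\prod_v k_v(S)!$, where $k_v(S) = |S\cap\mathrm{in}(v)|$, and this factor is not polynomially bounded. I expect this can be resolved by replacing each complete-bipartite gadget with a \emph{canonical-pairing} gadget that forces a fixed pairing order on the ports via additional auxiliary bipartite structure, thereby reducing the multiplicity to $1$. Once parsimony is secured, invoking the FPRAS for counting perfect matchings in a bipartite graph immediately yields the desired FPRAS for counting vertices of $P$.
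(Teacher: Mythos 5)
Your first step---translating the problem, under the $0/1$ promise, to counting $0/1$ circulations (equivalently, Eulerian sub-digraphs) of the unit-capacity digraph $D=(V,\overleftarrow{E}\cup T)$---is correct and matches the first half of the paper's argument. The gap lies in the reduction of this counting problem to bipartite perfect matchings. You rightly observe that the naive complete-bipartite pass-through gadget overcounts each Eulerian sub-digraph $S$ by $\prod_v k_v(S)!$, a factor that varies with $S$ and cannot simply be divided out. But the remedy you propose---a ``canonical-pairing'' bipartite gadget admitting exactly one perfect-matching extension for every balanced choice of active in- and out-ports---is asserted without construction, and it is far from clear that such a gadget exists. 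Consider a vertex with in- and out-degree two. With two out-ports and two in-ports and no auxiliary vertices, the four requirements on single-port-pair deletions force every port-to-port edge to be present, so the gadget is $K_{2,2}$ and has two perfect matchings rather than one. Adding one auxiliary vertex on each side does not help: a short case analysis of the resulting $3\times3$ biadjacency matrix shows that once the auxiliary--auxiliary entry and the four relevant $2\times2$ permanents are forced to equal $1$, the full $3\times3$ permanent is necessarily $0$ or $2$, never $1$. So the step you flag as ``the hard part'' is a genuine hole, not a routine detail to be filled in.

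The paper sidesteps the issue entirely: having reached the unit-capacity flow formulation, it simply cites Corollary~8.2 of Jerrum, Sinclair and Vigoda~\cite{JSV}, which already supplies the reduction from counting integral flows in a unit-capacity network to counting bipartite perfect matchings, and hence an FPRAS via their permanent algorithm. That reduction is not a naive local vertex-split, and rediscovering it would take considerably more work than your sketch suggests. If you want a self-contained proof rather than a citation, you will need a genuinely different gadget or a global argument that controls the transition-system overcounting.
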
 

As before, in light of Lemma~\ref{lem:preserve} we may discharge the promise by adding extra constraints $\bfz\leq x\leq\bfo$ to the system $Ax\leq b$.

\begin{proof}[Proof of Proposition~\ref{prop:network}]
Observe that applying the above construction yields a network in which each arc has capacity~1.  In the case of arcs in $\overleftarrow{E}$, this is by construction.  Arcs in $T$ have lower bound 0 arising from the inequality $Ax\geq\bfz$, and upper bound 1 from the inequality $Ax\leq\bfo$.  In directed graphs where all arcs have capacity~1, Jerrum, Sinclair and Vigoda~\cite[Cor.~8.2]{JSV} show that the number of integral flows can be obtained by reduction to counting perfect matchings in a bipartite graph, for which there is an FPRAS.  Note that since $P$ is a 0/1 polytope, all 0/1 points in~$P$ are actually vertices of~$P$.
\end{proof} 

There seems to be no strong reason to doubt that counting flows in networks with more general bounds admits an FPRAS{}.  However, it appears that this extension of the known FPRAS for counting perfect matchings would require new ideas.  Note that using network matrices we can encode problems such as $b$-matchings and $b$-edge covers in bipartite graphs.  A \emph{$b$-matching} (respectively, \emph{$b$-edge cover}) of a graph is an edge subset that covers each vertex at most (respectively, at least) $b$~times.
For these problems, we have efficient approximation algorithms for some~$b$ but not in general:  see Huang, Lu and Zhang~\cite{winding}. 

\begin{problem}
  Is there an FPRAS for \nPolytopeVertices, subject only to the restriction that $A$ is a network matrix?
\end{problem}


\section{Beyond totally unimodular matrices}\label{sec:TDI}

Many integral polytopes arising in combinatorial optimisation arise from totally unimodular matrices.  In fact, unimodular matrices are the only ones with the property that the polyhedron defined by $Ax\leq b$ is integral for all choices of the integral vector~$b$.  However, if we consider $A$ and~$b$ together, it can happen that the pair $(A,b)$ defines an integral polytope even when $A$ is not totally unimodular.  Since we have not so far discovered any family of $0/1$ polytopes whose vertex-counting problem is harder than $\nBIS$, it is tempting to look beyond totally unimodular. 

A known class of integral polytopes arise from `Totally Dual Integral' (TDI) pairs $(A,b)$~\cite[\S5.17]{SchrijverA}.  A fascinating example is provided by the stable matching polytope of a bipartite graph, which is defined by a natural system of polynomially many inequalities~\cite{VandeVate}.  The defining matrix~$A$ is apparently not totally unimodular, but the linear system $(A,b)$ was shown to be TDI by Kir\'{a}ly and Pap~\cite{KiralyPap}.  Intriguingly, Chebolu, Goldberg and Martin~\cite{ChGM} have shown that the problem of counting stable matchings (and hence the vertices of the stable matching polytope) is interreducible with $\nBIS$ via approximation-preserving reductions.  So, again, we do not manage to get beyond $\nBIS$.  This raises the question of whether the  vertex counting problem considered here is $\nBIS$-easy. 

\begin{problem}
Is \nPolytopeVertices{} approximation-preserving reducible to $\nBIS$ in general?
\end{problem}

\section*{Acknowledgement}

HG would like to thank Elle Najt for sharing the manuscript \cite{Najt} and the pointer to \cite{Khachiyan}.




\bibliographystyle{plain}
\bibliography{polytope-vertex-count}

\begin{thebibliography}{10}

\bibitem{ALOV}
Nima Anari, Kuikui Liu, Shayan~Oveis Gharan, and Cynthia Vinzant.
\newblock Log-concave polynomials {II}: {H}igh-dimensional walks and an {FPRAS}
  for counting bases of a matroid.
\newblock In {\em S{TOC}'19---{P}roceedings of the 51st {A}nnual {ACM} {SIGACT}
  {S}ymposium on {T}heory of {C}omputing}, pages 1--12. ACM, New York, 2019.

\bibitem{AvisDevroye}
David Avis and Luc Devroye.
\newblock Estimating the number of vertices of a polyhedron.
\newblock {\em Inform. Process. Lett.}, 73(3-4):137--143, 2000.

\bibitem{Balinski}
Michel~L. Balinski.
\newblock Integer programming: {M}ethods, uses, computation.
\newblock {\em Management Sci.}, 12:253--313, 1965.

\bibitem{ChGM}
Prasad Chebolu, Leslie~Ann Goldberg, and Russell Martin.
\newblock The complexity of approximately counting stable matchings.
\newblock {\em Theoret. Comput. Sci.}, 437:35--68, 2012.

\bibitem{ChenEtAl}
Xi~Chen, Martin Dyer, Leslie~Ann Goldberg, Mark Jerrum, Pinyan Lu, Colin
  McQuillan, and David Richerby.
\newblock The complexity of approximating conservative counting {CSP}s.
\newblock {\em J. Comput. System Sci.}, 81(1):311--329, 2015.

\bibitem{DingFengZang}
Guoli Ding, Li~Feng, and Wenan Zang.
\newblock The complexity of recognizing linear systems with certain integrality
  properties.
\newblock {\em Math. Program.}, 114(2, Ser. A):321--334, 2008.

\bibitem{Relative}
Martin Dyer, Leslie~Ann Goldberg, Catherine Greenhill, and Mark Jerrum.
\newblock The relative complexity of approximate counting problems.
\newblock {\em Algorithmica}, 38(3):471--500, 2004.

\bibitem{winding}
Lingxiao Huang, Pinyan Lu, and Chihao Zhang.
\newblock Canonical paths for {MCMC}: from art to science.
\newblock In {\em Proceedings of the {T}wenty-{S}eventh {A}nnual {ACM}-{SIAM}
  {S}ymposium on {D}iscrete {A}lgorithms}, pages 514--527. ACM, New York, 2016.

\bibitem{JSV}
Mark Jerrum, Alistair Sinclair, and Eric Vigoda.
\newblock A polynomial-time approximation algorithm for the permanent of a
  matrix with nonnegative entries.
\newblock {\em J. ACM}, 51(4):671--697, 2004.

\bibitem{JVV}
Mark~R. Jerrum, Leslie~G. Valiant, and Vijay~V. Vazirani.
\newblock Random generation of combinatorial structures from a uniform
  distribution.
\newblock {\em Theoret. Comput. Sci.}, 43(2-3):169--188, 1986.

\bibitem{Khachiyan}
Leonid Khachiyan.
\newblock Transversal hypergraphs and families of polyhedral cones.
\newblock In {\em Advances in convex analysis and global optimization
  ({P}ythagorion, 2000)}, volume~54 of {\em Nonconvex Optim. Appl.}, pages
  105--118. Kluwer Acad. Publ., Dordrecht, 2001.

\bibitem{KiralyPap}
Tam\'{a}s Kir\'{a}ly and J\'{u}lia Pap.
\newblock Total dual integrality of {R}othblum's description of the
  stable-marriage polyhedron.
\newblock {\em Math. Oper. Res.}, 33(2):283--290, 2008.

\bibitem{bipartiteHC}
M.~S. Krishnamoorthy.
\newblock An {NP}-hard problem in bipartite graphs.
\newblock {\em SIGACT News}, 7(1):26, January 1975.

\bibitem{MV89}
Milena Mihail and Umesh Vazirani.
\newblock On the magnification of $0$-$1$ polytopes.
\newblock Technical Report TR-03-89, Harvard university, 1989.

\bibitem{Najt}
Elle Najt.
\newblock The intractability of uniformly sampling and counting the vertices of
  a polytope.
\newblock 2019.
\newblock unpublished.

\bibitem{PapadimitriouYannakakis}
C.~H. Papadimitriou and M.~Yannakakis.
\newblock On recognizing integer polyhedra.
\newblock {\em Combinatorica}, 10(1):107--109, 1990.

\bibitem{SalomoneEtAl}
Robert Salomone, Radislav Vaisman, and Dirk Kroese.
\newblock Estimating the number of vertices in convex polytopes.
\newblock In {\em 4th Annual International Conference on Operations Research
  and Statistics ({ORS} 2016)}. Global Science {\&} Technology Forum ({GSTF}),
  January 2016.

\bibitem{SchrijverA}
Alexander Schrijver.
\newblock {\em Combinatorial optimization. {P}olyhedra and efficiency. {V}ol.
  {A}}, volume~24 of {\em Algorithms and Combinatorics}.
\newblock Springer-Verlag, Berlin, 2003.
\newblock Paths, flows, matchings, Chapters 1--38.

\bibitem{SeymourDecomp}
P.~D. Seymour.
\newblock Decomposition of regular matroids.
\newblock {\em J. Combin. Theory Ser. B}, 28(3):305--359, 1980.

\bibitem{Stockmeyer}
Larry Stockmeyer.
\newblock On approximation algorithms for {$\#{\rm P}$}.
\newblock {\em SIAM J. Comput.}, 14(4):849--861, 1985.

\bibitem{Truemper}
Klaus Truemper.
\newblock A decomposition theory for matroids. {V}. {T}esting of matrix total
  unimodularity.
\newblock {\em J. Combin. Theory Ser. B}, 49(2):241--281, 1990.

\bibitem{ValPerm}
L.~G. Valiant.
\newblock The complexity of computing the permanent.
\newblock {\em Theoret. Comput. Sci.}, 8(2):189--201, 1979.

\bibitem{NPeasy}
L.~G. Valiant and V.~V. Vazirani.
\newblock N{P} is as easy as detecting unique solutions.
\newblock {\em Theoret. Comput. Sci.}, 47(1):85--93, 1986.

\bibitem{VandeVate}
John~H. Vande~Vate.
\newblock Linear programming brings marital bliss.
\newblock {\em Oper. Res. Lett.}, 8(3):147--153, 1989.

\end{thebibliography}

\end{document}